\renewcommand\bibsection%
\renewcommand{\paragraph}[1]{\vspace*{1em} \noindent {\sc #1.}}
\newcommand{\halmos}{\hspace*{\fill}\rule{1ex}{1.4ex}}
\def\newproof#1{\@nprf{#1}}
\def\@nprf#1#2{\expandafter\@ifdefinable\csname #1\endcsname
\global\@namedef{#1}{\@prf{#1}{#2}}\global\@namedef{end#1}{\@endproof}}
\def\@prf#1#2{\@beginproof{#2}{\csname the#1\endcsname}\ignorespaces}
\def\@beginproof#1{\rm \trivlist \item[\hskip \labelsep{\bf #1: }]}
\def\@endproof{\halmos \endtrivlist}
\newproof{prftheorem}{Proof of Theorem~\ref{centraltheoremtechnical}}
\def\operator@font{\mathgroup\symoperators}
\def\Lap{\mathop{\operator@font Lap}}
\newcounter{mycount} 
\newcounter{nmycount}
\begin{document}

\title{Efficient data hashing with structured binary embeddings}

\titlerunning{On the boosting ability of top-down decision tree learning algorithm for multiclass classification}

\author{}
\author{Krzysztof Choromanski \\ Google Research}
\authorrunning{}

\institute{}

\maketitle

\begin{abstract}
We present here new mechanisms for hashing data via binary embeddings. Contrary to most of the techniques presented before, the embedding matrix of our mechanism is highly structured.
That enables us to perform hashing more efficiently and use less memory. What is crucial and nonintuitive is the fact that imposing structured mechanism does not affect the quality of the produced hash. To the best of our knowledge, we are the first to give strong theoretical guarantees of the proposed binary hashing method by proving the efficiency of the mechanism for several classes of structured projection matrices. As a corollary, we obtain binary hashing mechanisms with strong concentration results for circulant and Topelitz matrices. Our approach is however much more general.
\end{abstract}

\section{Hashing mechanism}
\label{sec:has_mech}

In this section we explain in detail proposed hashing mechanism for initial dimensionality reduction
that is used to preprocess data before it is given as an input to the autoencoder. As mentioned earlier, 
the mechanism is of its own interest.
We introduce first the aforementioned family of $\Psi$-regular matrices $\mathcal{P}$ that is
a key ingredient of the method.

Assume that $k$ is the size of the hash and $n$ is the dimensionality of the data.
Let $t$ be the size of the pool of independent random gaussian variables $\{g_{1},...,g_{t}\}$,
where each $g_{i} \sim \mathcal{N}(0,1)$. Assume that $k \leq n \leq t \leq kn$.
We say that a random matrix $\mathcal{P}$ is $\Psi$-regular if $\mathcal{P}$ is of the form:

\begin{equation}
\left( \begin{array}{ccccc}
\sum_{l \in \mathcal{S}_{1,1}}g_{l}     &  ... & \sum_{l \in \mathcal{S}_{1,j}}g_{l} & ... & \sum_{l \in \mathcal{S}_{1,n}}g_{l}\\
... & ... & ... & ... & ... \\
\sum_{l \in \mathcal{S}_{i,1}}g_{l} & ... & \sum_{l \in \mathcal{S}_{i,j}}g_{l} & ... & \sum_{l \in \mathcal{S}_{i,n}}g_{l}\\
... & ... & ... & ... & ... \\
\sum_{l \in \mathcal{S}_{k,1}}g_{l} & ... & \sum_{l \in \mathcal{S}_{k,j}}g_{l} & ... & \sum_{l \in \mathcal{S}_{k,n}}g_{l}
\end{array} \right)
\end{equation}

where $S_{i,j} \subseteq \{1,...,t\}$ for $i \in \{1,...,k\}$, $j \in \{1,...,n\}$, $|S_{i,1}|=...=|S_{i,n}|$ for $i=1,...,k$, $S_{i,j} \cap S_{i,u} = \emptyset$
for $i \in \{1,...,k\}$, $\{j,u\} \subseteq \{1,...,n\}$, $j \neq u$ and furthermore the following holds: 
\begin{itemize}
\item for every column of $\mathcal{P}$ every $g_{l}$ appears in at most $\Phi$ entries from that column.
\end{itemize}

Notice that all structured matrices that we mentioned in the abstract are special cases of the $0$-regular matrix.
Indeed, each Toeplitz matrix is clearly $0$-regular, where subsets $\mathcal{S}_{i,j}$ are singletons.

Let $\phi$ be a function satisfying $\lim_{x \rightarrow \infty} \phi(x) = 1$ and $\lim_{x \rightarrow -\infty} \phi(x) = -1$.
We will consider two hashing methods. The first one, called by us \textit{extended $\Psi$-regular hashing}, applies first random diagonal matrix $\mathcal{R}$ to the datapoint $x$,
then the $L_{2}$-normalized Hadamard matrix $\mathcal{H}$, next another random diagonal matrix $\mathcal{D}$, then the $\Psi$-regular projection matrix $\mathcal{P}_{\Psi}$ and finally function $\phi$ (the latter one applied pointwise).
The overal scheme is presented below:
\begin{equation}
x \xrightarrow {\mathcal{R}} x_{\mathcal{R}} \xrightarrow {\mathcal{H}} x_{\mathcal{H}} \xrightarrow {\mathcal{D}} x_{\mathcal{D}} \xrightarrow {\mathcal{P}_{\Psi}} x_{\mathcal{P}_{\Psi}} \xrightarrow {\phi} h(x) \in \mathbb{R}^{k}.
\end{equation}
The diagonal entries of matrices $\mathcal{R}$ and $\mathcal{D}$ are chosen independently from the binary set $\{-1,1\}$,
each value being chosen with probability $\frac{1}{2}$.
We also propose a shorter pipeline, called by us \textit{short $\Psi$-regular hashing}, where we avoid applying first random matrix and Hadamard matrix $\mathcal{R}$ and the Hadamard matrix, i.e. the overall pipeline is of the form:
\begin{equation}
x  \xrightarrow {\mathcal{D}} x_{\mathcal{D}} \xrightarrow {\mathcal{P}_{\Psi}} x_{\mathcal{P}_{\Psi}} \xrightarrow {\phi} h(x) \in \mathbb{R}^{k}.
\end{equation}
The goal is to compute good approximation of the angular distance between given $L_{2}$-normalized vectors $p,r$, given their 
compact hashed versions: $h(p), h(r)$. 
To achieve this goal we consider the $L_{1}$-distance in the $k$-dimensional space of hashes.
Let $\theta_{p,r}$ denote the angle between vectors $p$ and $r$. We define the \textit{normalized approximate angle between $p$ and $r$} as:  
\begin{equation}
\tilde{\theta}_{p,r}^{n} = \frac{1}{2k}\|h(p)-h(r)\|_{1}
\end{equation} 
In the next section we will show that the normalized approximate angle between vectors $p$ and $r$ is a very precise estimation of the actual angle if the chosen parameter $\Psi$ is not large enough. Furthermore, we show an intriguing connection between theoretical guarantess regarding the quality of the produced hash and the chromatic number of some specific undirected graph encoding the structure of $\mathcal{P}$. For many of the structured matrices under consideration this graph is induced by an algebraic group operation defining the structure of $\mathcal{P}$ (for istance, for the circular matrix the group is a single shift and the underlying graph is a collection of pairwise disjoint cycles and trees thus its chromatic number is at most $3$).

\section{Theoretical results}
\label{sec:the}
\subsection{Introduction}
We are ready to provide theoretical guarantees regarding the quality of the produced hash. Our guarantees will be given for a \textit{sign} function, i.e for $\phi$ defined as: $\phi(x) = 1$ for $x \geq 0$, $\phi(x) = -1$ for $x < 0$. However we should emphasize that empirical results showed that other functions (that are often used as nonlinear maps in deep neural networks) such as sigmoid function, also work well. It is not hard to show that $\tilde{\theta}_{p,r}^{n}$ is an unbiased estimator of $\frac{\theta_{p,r}}{\Pi}$, i.e. $E(\tilde{\theta}_{p,r}^{n}) = \frac{\theta_{p,r}}{\Pi}$. What we will focus on is the concentration of the random variable $\tilde{\theta}_{p,r}^{n}$ around its mean $\frac{\theta_{p,r}}{\Pi}$. We will prove strong exponential concentration results regarding the extended $\Psi$-regular hashing method. Interestingly, the application of the Hadamard mechanism is not necessary and it is possible to get concentration results, yet weaker than in the former case, also for short $\Psi$-regular hashing.
As a warm up, let us prove the following.

\begin{lemma}
\label{mean_lemma}
Let $\mathcal{M}$ be a $\Psi$-regular hashing model (either extended or short). Then $\tilde{\theta}_{p,r}^{n}$ is an unbiased estimator of $\theta_{p,r}$, i.e. 
$$E(\tilde{\theta}_{p,r}^{n}) = \frac{\theta_{p,r}}{\Pi}.$$
\end{lemma}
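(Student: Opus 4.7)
The proof will hinge on three observations: (i) with $\phi$ equal to the sign function, the $\ell_1$ distance decomposes into a sum of independent-looking coordinate indicators; (ii) each row of the $\Psi$-regular matrix $\mathcal{P}_{\Psi}$ is, by itself, a vector of i.i.d.\ Gaussians; and (iii) the preprocessing maps ($\mathcal{R},\mathcal{H},\mathcal{D}$) are orthogonal, so they preserve the angle between $p$ and $r$. The $\Phi$ parameter plays no role here; it will only matter for the concentration results later.

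First, I would write
\begin{equation*}
\tilde{\theta}_{p,r}^{n} \;=\; \frac{1}{2k}\sum_{i=1}^{k}\bigl|\phi((\mathcal{P}_{\Psi}y_p)_i)-\phi((\mathcal{P}_{\Psi}y_r)_i)\bigr| \;=\; \frac{1}{k}\sum_{i=1}^{k}\mathbbm{1}\bigl[\phi((\mathcal{P}_{\Psi}y_p)_i)\neq \phi((\mathcal{P}_{\Psi}y_r)_i)\bigr],
\end{equation*}
where $y_p,y_r$ denote the preprocessed vectors (either $\mathcal{D}\mathcal{H}\mathcal{R}p,\mathcal{D}\mathcal{H}\mathcal{R}r$ in the extended case or $\mathcal{D}p,\mathcal{D}r$ in the short case). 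By linearity of expectation, it suffices to show that for every row index $i$,
\begin{equation*}
\Pr\bigl[\operatorname{sign}((\mathcal{P}_{\Psi}y_p)_i)\neq \operatorname{sign}((\mathcal{P}_{\Psi}y_r)_i)\bigr] \;=\; \frac{\theta_{p,r}}{\pi}.
\end{equation*}

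Next, I would fix a row $i$ and analyse its joint distribution. By the $\Psi$-regularity conditions, the sets $\mathcal{S}_{i,1},\ldots,\mathcal{S}_{i,n}$ are pairwise disjoint and of equal cardinality, say $s$. Hence the entries $(\mathcal{P}_{\Psi})_{i,j}=\sum_{l\in\mathcal{S}_{i,j}}g_l$ for $j=1,\ldots,n$ are sums over disjoint subsets of the independent standard Gaussians $\{g_l\}$, so they are jointly Gaussian, mutually independent, and each is $\mathcal{N}(0,s)$. Thus the $i$-th row of $\mathcal{P}_{\Psi}$ is distributed as $\sqrt{s}\,G$ with $G\sim\mathcal{N}(0,I_n)$; scaling by a positive factor does not affect the signs, so the classical SimHash / random hyperplane identity of Goemans–Williamson, namely $\Pr[\operatorname{sign}(\langle G,u\rangle)\neq \operatorname{sign}(\langle G,v\rangle)]=\theta_{u,v}/\pi$ for unit vectors $u,v$, applies with $u=y_p/\|y_p\|$ and $v=y_r/\|y_r\|$.

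Finally I would verify, by conditioning on the random diagonal matrices, that $\theta_{y_p,y_r}=\theta_{p,r}$ almost surely. In the short model, $\mathcal{D}$ is $\pm 1$-diagonal and hence orthogonal, so $\langle \mathcal{D}p,\mathcal{D}r\rangle=\langle p,r\rangle$ and the norms are preserved; in the extended model, $\mathcal{R}$, the normalized Hadamard matrix $\mathcal{H}$, and $\mathcal{D}$ are each orthogonal, so the same conclusion holds. Therefore, conditionally on the preprocessing matrices, the probability above equals $\theta_{p,r}/\pi$, and taking the outer expectation gives the same value unconditionally. Summing over $i$ and dividing by $k$ yields the lemma. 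The only mildly delicate point is making explicit that the disjointness of the $\mathcal{S}_{i,j}$ within a row, combined with the equal-cardinality hypothesis, gives an exactly isotropic Gaussian row — all other steps are bookkeeping.
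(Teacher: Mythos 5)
Your proof is correct and follows essentially the same route as the paper's: decompose $\tilde{\theta}_{p,r}^{n}$ into an average of collision indicators, observe that each row of $\mathcal{P}_{\Psi}$ is an isotropic (scaled) Gaussian because the sets $\mathcal{S}_{i,j}$ within a row are disjoint and of equal size, and use that the preprocessing matrices are orthogonal and hence angle-preserving. The only cosmetic difference is that you invoke the Goemans--Williamson random-hyperplane identity as a known fact, whereas the paper rederives it via the cone/projection argument in the plane $H_{p,r}$ (and, incidentally, your statement that each entry is $\mathcal{N}(0,s)$ with $s=|\mathcal{S}_{i,1}|$ is the correct reading of the variance, where the paper's wording is slightly loose).
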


\begin{proof}
Notice first that the $i$th row, call it $g^{i}$, of the matrix $\mathcal{P}$ is a $n$-dimensional gaussian vector with mean $0$ and where each element has standard deviation $\sigma_{i}$ for $\sigma_{i}=|\mathcal{S}_{i,1}|=...=|\mathcal{S}_{i,n}|$ ($i=1,...,k$).
Thus, after applying matrix $\mathcal{D}$ the new vector $g^{i}_{\mathcal{D}}$ is still gaussian and of the same distribution.
Let us consider first the short $\Psi$-regular hashing model.
Fix some $L_{2}$-normalized vectors $p,r$ (without loss of generality we may assume that they are not collinear) and denote by $H_{p,r}$ the $2$-dimensional hyperplane spanned by $\{p,r\}$. Denote by $g^{i}_{\mathcal{D},H}$ the projection of 
$g^{i}_{\mathcal{D}}$ into $H$ and by $g^{i}_{\mathcal{D},H,\perp}$ the line in $H$ perpendicular to $g^{i}_{\mathcal{D},H}$. Let $\phi$ be a \textit{sign} function. Notice that the contribution to the $L_{1}$-sum $\|h(p)-h(r)\|_{1}$ comes from those $g^{i}$ for which
$g^{i}_{\mathcal{D},H,\perp}$ divides an angel between $p$ and $r$, i.e. from those $g^{i}$ for which $g^{i}_{\mathcal{D},H}$
is inside the union $\mathcal{U}_{p,r}$ of two $2$-dimensional cones bounded by two lines in $H$ perpendicular to $p$ and $r$
respectively. 
Observe that, from what we have just said, we can conclude that $\tilde{\theta}_{p,r}^{n} = \frac{X_{1} + ... + X_{k}}{k}$, where:
\begin{equation}
X_{i} =
\left\{
	\begin{array}{ll}
		1  & \mbox{if }  g^{i}_{\mathcal{D},H} \in \mathcal{U}_{p,r}, \\
		0 & \mbox{otherwise.} 
	\end{array}
\right.
\end{equation}

Now it suffices to notice that vector $g^{i}_{\mathcal{D},H}$ is a gaussian random variable and thus its direction is uniformly distributed over all directions. Thus each $X_{i}$ is nonzero with probability exactly $\frac{\theta}{\Pi}$ and the theorem follows.
For the extended $\Psi$-regular hashing model the analysis is very similar.
The only difference is that data is preprocessed by applying $\mathcal{H}\mathcal{R}$ linear mapping first. Both $\mathcal{H}$ and $\mathcal{R}$ are matrices of rotations though, thus their product is also a rotation matrix. Since rotations do not change angular distance, the former analysis can be applied again and yields the proof.
\end{proof}

\subsection{The $\mathcal{P}$-chromatic number}

As we have already mentioned, the highly well organized structure of the projection matrix $\mathcal{P}$ gives rise
to the underlying undirected graph that encodes dependencies between different entries of $\mathcal{P}$.
More formally, let us fix two rows of $\mathcal{P}$ of indices $1 \leq k_{1} < k_{2} \leq k$.
We define a graph $\mathcal{G}_{\mathcal{P}}(k_{1},k_{2})$ as follows: 
\begin{itemize}
\item $V(\mathcal{G}_{\mathcal{P}}(k_{1},k_{2})) = \{\{j_{1},j_{2}\}: \exists l \in \{1,...,t\} s.t.  g_{l} \in \mathcal{S}_{k_{1},j_{1}} \cap \mathcal{S}_{k_{2},j_{2}}, j_{1} \neq j_{2}\}$,
\item there exists an edge between vertices $\{j_{1},j_{2}\}$ and $\{j_{3},j_{4}\}$ iff $\{j_{1},j_{2}\} \cap \{j_{3},j_{4}\} \neq \emptyset$.
\end{itemize}

The chromatic number $\chi(\mathcal{G})$ of the graph $\mathcal{G}$ is the minimal number of colors that can be used to color the vertices of the graph in such a way that no two adjacent vertices have the same color.

\begin{definition}
Let $\mathcal{P}$ be a $\Psi$-regular matrix. We define the $\mathcal{P}$-chromatic number $\chi(\mathcal{P})$ 
as:
$$\chi(\mathcal{P}) = \max_{1 \leq k_{1} < k_{2} \leq k} \chi(\mathcal{G}(k_{1},k_{2})).$$
\end{definition}

\subsection{Concentration inequalities for structured hashing with \textit{sign} function}

We present now our main theoretical results. Let us consider first the extended $\Psi$-regular hashing model. 
The following is true.

\begin{theorem}
\label{ext_technical_theorem}
Take the extended $\Psi$-regular hashing model $\mathcal{M}$ with $t$ independent gaussian random variables: $g_{1},...,g_{t}$, each of distribution $\mathcal{N}(0,1)$.
Let $N$ be the size of the dataset. Denote by $k$ the size of the hash and by $n$ the dimensionality of the data. Let $f(n)$ be arbitrary positive function. 
Let $p, r$ be two fixed vectors $p,r \in \mathbb{R}^{n}$ with angular distance $\theta_{p,r}$ between them.
Then for every $a,\epsilon>0$ the following is true:
$$
\mathbb{P}(|\tilde{\theta}^{n}_{p,r} - \frac{\theta}{\Pi}| \leq \epsilon) \geq (1-4{N \choose 2}e^{-\frac{f^{2}(n)}{2}}-4\chi(\mathcal{P}){k \choose 2}e^{-\frac{2a^{2}t}{f^{4}(t)}})(1-\Lambda),
$$
where $\Lambda = \frac{1}{\Pi} \sum_{j = \frac{\epsilon k}{2}}^{k}\frac{1}{\sqrt{j}}(\frac{ke}{j})^{j}\mu^{j}(1-\mu)^{k-j}+2e^{-\frac{\epsilon^{2}k}{2}}$ and $\mu=\frac{8k(a\chi(\mathcal{P}) + \Psi\frac{f^{2}(n)}{n})}{\theta_{p,r}}$.
\end{theorem}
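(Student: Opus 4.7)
The plan is to decompose the failure event $\{|\tilde{\theta}_{p,r}^n - \theta_{p,r}/\Pi| > \epsilon\}$ into the union of three bad sub-events matching the three factors in the stated bound. Lemma~\ref{mean_lemma} writes $\tilde{\theta}_{p,r}^n = (X_1 + \cdots + X_k)/k$, where each $X_i = \mathbf{1}[g^i_{\mathcal{D},H} \in \mathcal{U}_{p,r}]$ is marginally $\mathrm{Bernoulli}(\theta_{p,r}/\Pi)$, but the $X_i$'s are correlated through the shared pool $\{g_l\}_{l=1}^{t}$. The strategy is to condition on two high-probability structural events that make the $X_i$'s behave like an i.i.d.\ Bernoulli sample up to an explicit number of ``corrupted'' entries, then apply a binomial-style tail bound to their empirical average.

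The first structural event is Hadamard flatness of the data: for any fixed unit vector $x$, the coordinates of $\mathcal{H}\mathcal{R}x$ are linear forms in the Rademacher diagonal of $\mathcal{R}$, so Hoeffding gives $\|\mathcal{H}\mathcal{R}x\|_\infty \le f(n)/\sqrt{n}$ with probability at least $1 - 2n e^{-f^2(n)/2}$. Applying this to $p, r, p\pm r$ and union-bounding over all $\binom{N}{2}$ pairs of dataset points produces the factor $1 - 4\binom{N}{2}e^{-f^2(n)/2}$. The second event controls pairwise row coupling of $\mathcal{P}_\Psi$. Conditioning on flatness and on $\mathcal{D}$, fix $k_1 < k_2$: the $2\times 2$ cross-covariance matrix of the planar projections $g^{k_1}_{\mathcal{D},H}, g^{k_2}_{\mathcal{D},H}$ is a bilinear form in $\{g_l\}$ whose off-diagonal part is indexed exactly by the vertex set of $\mathcal{G}_\mathcal{P}(k_1,k_2)$. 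A proper $\chi(\mathcal{P})$-coloring of that graph splits each of the four covariance entries into $\chi(\mathcal{P})$ sub-sums of \emph{independent} linear combinations of the $g_l$'s; each sub-sum has variance $O(f^4(t)/t)$ by flatness and $\Psi$-regularity, and Gaussian concentration bounds its deviation by $a$ with probability $1 - 2 e^{-2a^2 t/f^4(t)}$. Union-bounding over the $\binom{k}{2}$ row pairs, the four covariance entries, and the $\chi(\mathcal{P})$ color classes yields the factor $1 - 4\chi(\mathcal{P})\binom{k}{2}e^{-2a^2 t/f^4(t)}$.

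On the intersection of the two structural events, every pair of planar Gaussian projections has covariance of magnitude at most $a\chi(\mathcal{P}) + \Psi f^2(n)/n$, the second summand coming from the $j_1 = j_2$ diagonal contributions, bounded via $\Psi$-regularity and flatness. A standard Gaussian-comparison argument then quantifies the probability that the direction of $g^i_{\mathcal{D},H}$ is perturbed across a boundary of $\mathcal{U}_{p,r}$: per row this is at most $\mu = 8k(a\chi(\mathcal{P}) + \Psi f^2(n)/n)/\theta_{p,r}$, where the $1/\theta_{p,r}$ factor reflects that the boundary band of $\mathcal{U}_{p,r}$ has angular measure proportional to $\theta_{p,r}$, so a fixed-scale perturbation is amplified when $\theta_{p,r}$ is small. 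Letting $M$ count the perturbed rows, $M$ is stochastically dominated by $\mathrm{Bin}(k,\mu)$, and Stirling's bound $\binom{k}{j} \le (ke/j)^j/\sqrt{j}$ gives
\[
\mathbb{P}[M \ge \epsilon k/2] \le \tfrac{1}{\Pi}\sum_{j=\epsilon k/2}^{k}\tfrac{1}{\sqrt{j}}(ke/j)^{j}\mu^{j}(1-\mu)^{k-j}.
\]
Conditional on $M < \epsilon k/2$, the remaining $\ge (1 - \epsilon/2)k$ rows contribute independent $\mathrm{Bernoulli}(\theta/\Pi)$ summands, and Hoeffding on their average gives the additive $2 e^{-\epsilon^2 k/2}$ tail. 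The sum of these two tails equals $\Lambda$, and multiplying by the probabilities of the two structural events produces the claimed bound.

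The main obstacle is the chromatic decomposition: one must verify that a proper coloring of $\mathcal{G}_\mathcal{P}(k_1,k_2)$ genuinely isolates the shared-Gaussian dependencies into linear sub-sums within each color class (so the color classes become matchings in the index-pair structure), and then track scale parameters through $\Psi$-regularity and flatness precisely enough to recover the exponent $2a^2 t/f^4(t)$. A closely related subtle step is the geometric Gaussian-comparison argument that converts pairwise covariance bounds into the per-row flip probability $\mu \propto (\text{covariance})/\theta_{p,r}$ with the exact constants; this requires careful handling of the event ``a planar Gaussian's direction lies in a narrow angular band around $\partial \mathcal{U}_{p,r}$'' and is the step most likely to produce suboptimal constants if done crudely.
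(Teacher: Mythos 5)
Your proposal follows essentially the same route as the paper's proof: the Hadamard flatness event (Lemma~\ref{hadamard_lemma}), the chromatic-number decomposition bounding the pairwise covariances of the row projections (Lemma~\ref{small_dot_product_lemma}), and a coupling of the correlated indicators with independent Bernoullis whose disagreement count is controlled by a binomial--Stirling tail plus an Azuma/Hoeffding term, yielding $\Lambda$ exactly as in Lemma~\ref{first_lemma}. The only point treated more loosely than in the paper is the origin of the factor $k$ in $\mu$, which the paper obtains explicitly from the accumulated Gram--Schmidt orthogonalization error $\|\tilde{s}_{i}-s_{i}\|_{2}\leq k\Delta$; otherwise the decomposition, key lemmas, and constants all match.
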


Notice how the upper bound on the probability of failure $\mathbb{P}_{\epsilon}$ depends on the $\mathcal{P}$-chromatic number. The theorem above guarantees strong concentration of $\tilde{\theta}^{n}_{p,r}$ around its mean and therefore justifies theoretically the effectiveness of the structured hashing method. It becomes more clearly below.

As a corollary, we obtain the following result:

\begin{theorem}
\label{ext_theorem}
Take the extended $\Psi$-regular hashing model $\mathcal{M}$ with. Assume that the projection matrix $\mathcal{P}$ is Toeplitz.
Let $N$ be the size of the dataset. Denote by $k$ the size of the hash and by $n$ the dimensionality of the data. Let $f(n)$ be an arbitrary positive function. 
Let $p, r$ be two vectors $p,r \in \mathbb{R}^{n}$ with angular distance $\theta_{p,r}$ between them.
Then for every $\epsilon>0$ the following is true: 
$$\mathbb{P}(|\tilde{\theta}^{n}_{p,r} - \frac{\theta}{\Pi}| \leq k^{-\frac{1}{3}}) \geq (1-O(\frac{N^{2}}{n^{4.5}})-O(k^{2}
e^{-\Omega(\frac{n^{\frac{1}{3}}}{\log^{2}(n)})}))(1-(\frac{k^{7}}{n})^{\frac{1}{3}}).$$
\end{theorem}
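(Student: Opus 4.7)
The plan is to derive Theorem~\ref{ext_theorem} as a direct specialisation of Theorem~\ref{ext_technical_theorem} to the Toeplitz setting. Two structural properties of the Toeplitz projection feed into the general bound---the regularity parameter $\Psi$ and the $\mathcal{P}$-chromatic number $\chi(\mathcal{P})$---and once these are pinned down, the result follows by an appropriate choice of the free parameters $a$ and $f$.

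First I would verify the structural facts. A $k\times n$ Toeplitz matrix has only $t = k+n-1$ independent Gaussian entries, each set $\mathcal{S}_{i,j}$ in the $\Psi$-regular presentation is a singleton, and no $g_{l}$ repeats inside any column, so $\Psi = 0$ (as observed by the paper immediately after the definition of $\Psi$-regularity). Under the standing hypothesis $k\le n$ this also yields $t = \Theta(n)$. For the chromatic number I would examine $\mathcal{G}_{\mathcal{P}}(k_1,k_2)$: the Toeplitz pairing forces the indices of any two colliding entries to satisfy $j_2 - j_1 = k_2 - k_1$, a fixed shift $\delta$. Hence every vertex has the form $\{j, j+\delta\}$, and every coordinate $j\in\{1,\dots,n\}$ appears in at most two such vertices, so the graph has maximum degree $\le 2$. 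Such a graph is a disjoint union of paths and cycles and is $3$-colourable, giving $\chi(\mathcal{P}) \le 3$.

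With $\Psi = 0$ and $\chi(\mathcal{P}) \le 3$ in hand, I would instantiate Theorem~\ref{ext_technical_theorem} with
\[
\epsilon = k^{-1/3},\qquad f(n) = 3\sqrt{\ln n},\qquad a = n^{-1/3}.
\]
The choice $f(n) = 3\sqrt{\ln n}$ makes $4\binom{N}{2}e^{-f^{2}(n)/2} = O(N^{2}/n^{9/2})$. Using $t = \Theta(n)$ and $f^{4}(t) = O(\log^{2} n)$, the choice $a = n^{-1/3}$ yields $4\chi(\mathcal{P})\binom{k}{2}e^{-2a^{2}t/f^{4}(t)} = O\!\bigl(k^{2}e^{-\Omega(n^{1/3}/\log^{2} n)}\bigr)$. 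These reproduce the two negative contributions in the outer factor of the claimed bound.

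What remains is to control the inner factor $(1-\Lambda)$. With the above parameters $\mu = 24ka/\theta_{p,r} = O(k/n^{1/3})$ (treating $\theta_{p,r}$ as a constant of order one) and $\epsilon k/2 = k^{2/3}/2$. Viewing $\sum_{j\ge \epsilon k/2}\tfrac{1}{\sqrt{j}}\bigl(\tfrac{ke}{j}\bigr)^{j}\mu^{j}(1-\mu)^{k-j}$ as an upper bound on the binomial tail $\mathbb{P}(\mathrm{Bin}(k,\mu) \ge \epsilon k/2)$, a Markov estimate gives $O(k\mu/(\epsilon k)) = O((k^{4}/n)^{1/3})$, which is in particular majorised by the looser $(k^{7}/n)^{1/3}$ asserted by the theorem; the residual term $2e^{-\epsilon^{2}k/2} = 2e^{-k^{1/3}/2}$ is negligible beside it. The only step that is not pure substitution is the chromatic-number computation for the Toeplitz dependency graph; the rest is parameter selection and a standard binomial tail estimate.
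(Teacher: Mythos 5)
Your proposal is correct and follows essentially the same route as the paper, which derives Theorem~\ref{ext_theorem} from Theorem~\ref{ext_technical_theorem} by the identical substitutions $a=n^{-1/3}$, $\epsilon=k^{-1/3}$, $f(n)=3\sqrt{\log n}$ together with the observations that a Toeplitz matrix is $0$-regular and has $\chi(\mathcal{P})\leq 3$. You in fact supply more detail than the paper does (the degree-$\leq 2$ argument for the dependency graph and the explicit Markov estimate on the binomial tail), which only strengthens the one-line derivation given there.
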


Theorem \ref{ext_theorem} follows from Theorem \ref{ext_technical_theorem} by taking:
$a=n^{-\frac{1}{3}}$, $\epsilon = k^{-\frac{1}{3}}$, $f(n)=3\sqrt{\log(n)}$ and noticing that every Toeplitz matrix is $0$-regular and the corresponding $\mathcal{P}$-chromatic number $\chi(\mathcal{P})$ is at most $3$.

Let us switch now to the short $\Psi$-regular hashing model. The theorem presented below is the application of the 
Chebyshev's inequality preceded by the careful analysis of the variance $Var(\tilde{\theta}^{n}_{p,r})$.

\begin{theorem}
\label{short_theorem}
Take the short $\Psi$-regular hashing model $\mathcal{M}$, where $\mathcal{P}$ is a Toeplitz matrix.
Let $N$ be the size of the dataset. Denote by $k$ the size of the hash and by $n$ the dimensionality of the data. 
Let $p, r$ be two vectors $p,r \in \mathbb{R}^{n}$ with angular distance $\theta_{p,r}$ between them.
Then the following is true for any $c>0$:
$$\mathbb{P}(|\tilde{\theta}^{n}_{p,r} - \frac{\theta}{\Pi}| \geq c (\frac{\sqrt{\log(k)}}{k})^{\frac{1}{3}}) = O(\frac{1}{c^{2}}).$$
\end{theorem}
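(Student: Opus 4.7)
The plan is to apply Chebyshev's inequality to $\tilde{\theta}^{n}_{p,r}$, which reduces the theorem to proving a variance bound of the form $\mathrm{Var}(\tilde{\theta}^{n}_{p,r}) = O\big((\sqrt{\log k}/k)^{2/3}\big)$; with such a bound, Chebyshev at the threshold $c(\sqrt{\log k}/k)^{1/3}$ immediately yields the claimed $O(1/c^{2})$ tail. Using the decomposition from Lemma~\ref{mean_lemma}, I would write $\tilde{\theta}^{n}_{p,r} = \tfrac{1}{k}\sum_{i=1}^{k} X_{i}$ with $X_{i} = \mathbbm{1}\{g^{i}_{\mathcal{D},H} \in \mathcal{U}_{p,r}\}$ and $E[X_{i}] = \theta_{p,r}/\Pi$, so that
$$\mathrm{Var}(\tilde{\theta}^{n}_{p,r}) \;\leq\; \frac{1}{4k} + \frac{1}{k^{2}}\sum_{i\neq j}\mathrm{Cov}(X_{i}, X_{j}).$$
The diagonal piece is immediately within budget, and the entire task reduces to controlling the pairwise covariances.

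For distinct $i, j$, I would condition on the diagonal sign matrix $\mathcal{D}$ and study the centered $4$-dimensional Gaussian $Z_{ij}(\mathcal{D}) = (\langle g^{i}_{\mathcal{D}}, p\rangle, \langle g^{i}_{\mathcal{D}}, r\rangle, \langle g^{j}_{\mathcal{D}}, p\rangle, \langle g^{j}_{\mathcal{D}}, r\rangle)$. Its two within-row $2 \times 2$ blocks equal the Gram matrix of $\{p, r\}$ and match exactly what one obtains from genuinely independent rows, while every entry of its cross-row block is -- by the Toeplitz shift structure -- a sum of the form $\sum_{m} \mathcal{D}_{m}\mathcal{D}_{m+(i-j)} u_{m} v_{m+(i-j)}$ with $u, v \in \{p, r\}$. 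Two observations are essential here: (a) every such summand involves two \emph{distinct} Rademacher variables of $\mathcal{D}$, so each cross entry has mean zero under $E_{\mathcal{D}}$; (b) the bound $\chi(\mathcal{P}) \leq 3$ for Toeplitz $\mathcal{P}$ lets us split the summands into $O(1)$ groups of pairwise-independent terms, collapsing the second moment of each cross entry into the shift-sum $\sum_{m} u_{m}^{2} v_{m+(i-j)}^{2}$.

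The event $\{X_{i} = X_{j} = 1\}$ is a smooth function of the covariance matrix of $Z_{ij}(\mathcal{D})$, so I would Taylor-expand $P(X_{i} = X_{j} = 1 \mid \mathcal{D})$ to second order around the independent-rows case (whose contribution is exactly $(\theta_{p,r}/\Pi)^{2}$); by (a) the first-order term vanishes in $E_{\mathcal{D}}$, leaving
$$\mathrm{Cov}(X_{i}, X_{j}) \;=\; O\!\big(E_{\mathcal{D}}\|\Sigma_{\mathrm{cross}}(\mathcal{D})\|_{F}^{2}\big),$$
which by (b) is controlled by $\sum_{m} u_{m}^{2} v_{m+(i-j)}^{2}$. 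A coordinatewise truncation on $p$ and $r$ (the counterpart of the concentration events appearing in Theorem~\ref{ext_technical_theorem}, contributing only a lower-order additive loss) then makes the sum over shifts $|i-j|$ fit the target variance rate, and Chebyshev closes the argument. I expect the main obstacle to be the uniform second-order smoothness estimate for the $4$-dimensional orthant probability: the quadratic-remainder bound has to hold for \emph{every} realization of $\mathcal{D}$, and it is precisely here that the bounded $\mathcal{P}$-chromatic number of Toeplitz matrices does the real work, converting an otherwise unwieldy higher-order correction into a shift-summable expression.
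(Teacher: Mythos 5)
Your outer shell matches the paper's: Chebyshev's inequality reduces everything to a variance bound, the variance splits into a diagonal term of order $1/k$ plus the normalized covariance sum, and the whole difficulty is the pairwise covariances. But your mechanism for bounding $\mathrm{Cov}(X_i,X_j)$ is genuinely different from the paper's, and it has a gap at exactly the step you flag. The paper does \emph{not} Taylor-expand an orthant probability to second order; it runs a first-order coupling argument (Lemma~\ref{variance_lemma}, inherited from the proof of Theorem~\ref{ext_technical_theorem}): Gram--Schmidt replaces the nearly-orthogonal projection vectors by exactly orthogonal ones, producing independent surrogate indicators, and the original and surrogate indicators differ only on an event of probability $O(\Delta)$, giving $|\mathrm{Cov}(U_i,U_j)|=O(\Delta)$ rather than $O(\Delta^2)$. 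Your second-order expansion would, if valid, give a stronger bound (order $1/k$ for the covariance sum by the shift-summability you note, beating the paper's $(\log k/k^2)^{1/3}$), but you supply no argument for the uniform quadratic-remainder estimate. This is not a formality: the remainder must be controlled along the whole segment from the block-diagonal covariance to the realized one, for \emph{every} $\mathcal{D}$, and for some realizations the cross-block entries are of order one (e.g.\ $p$ supported on two adjacent coordinates and a unit Toeplitz shift between rows $i$ and $j$ makes the two rows share a Gaussian, giving correlation $\pm 1/2$), so the joint covariance can approach configurations where the orthant probability's derivatives are not uniformly bounded. The chromatic number does not rescue this — it controls how the Rademacher sums decompose into independent groups, not the analytic regularity of the orthant probability near degeneracy.

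The second concrete problem is your appeal to ``a coordinatewise truncation on $p$ and $r$.'' In the short model there is no Hadamard preprocessing, so you cannot assume the coordinates of the relevant orthonormal basis of $H_{p,r}$ are small — in the worst case $p$ is a standard basis vector and truncation destroys it. Producing that coordinatewise smallness is precisely what $\mathcal{H}\mathcal{R}$ buys in the extended model, and the entire point of the paper's proof of Theorem~\ref{short_theorem} is to get by without it. The paper's substitute is a counting argument: the shift-sums $\alpha_{x,s}=\sum_j x_j^2x_{j+s}^2$ satisfy $\sum_s \alpha_{x,s}=1$, so at most $\binom{k}{2}/f(k)$ shifts are ``bad''; those pairs contribute $O(1/f(k))$ to the normalized covariance sum via the trivial bound $|\mathrm{Cov}|=O(1)$, the remaining pairs get the Azuma-based bound, and optimizing $f(k)=(k^2/\log k)^{1/3}$ yields the stated rate. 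Ironically, your own shift-summability observation ($\sum_s\sum_m u_m^2v_{m+s}^2\leq 1$) would make the truncation unnecessary if the Taylor step were justified; as written, the proposal leans on one unavailable tool and one unproven analytic estimate, so it does not yet constitute a proof.
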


The proofs of Theorem \ref{ext_technical_theorem} and Theorem \ref{short_theorem} will be given in the Appendix.

\section{Appendix}

In this section we prove Theorem \ref{ext_technical_theorem} and Theorem \ref{short_theorem}.
We will use notation from Lemma \ref{mean_lemma}.

\subsection{Proof of Theorem \ref{ext_technical_theorem}}

We start with the following technical lemma:

\begin{lemma}
\label{first_lemma}
Let $\{Z_{1},...,Z_{k}\}$ be the set of $k$ independent random variables defined on $\Omega$ such that each $Z_{i}$ has the same distribution and $Z_{i} \in \{0,1\}$. Let $\{\mathcal{F}_{1},...,\mathcal{F}_{k}\}$ be the set of events, where each $\mathcal{F}_{i}$ is in the $\sigma$-field defined by $Z_{i}$ (in particular $\mathcal{F}_{i}$ does not depend on the $\sigma field$ $\sigma(Z_{1},...,Z_{i-1},Z_{i+1},...Z_{k})$). Assume that there exists $\mu < \frac{1}{2}$ such that: $\mathbb{P}(\mathcal{F}_{i}) \leq \mu$ for $i=1,...,k$.
Let $\{U_{1},...,U_{k}\}$ be the set of $k$ random variables such that $U_{i} \in \{0,1\}$ and
$U_{i} | \mathcal{F}_{i} = Z_{i} |\mathcal{F}_{i}$ for $i=1,...,k$, where $X|\mathcal{F}$ stands
for the random variable $X$ truncated to the event $\mathcal{F}$. Assume furthermore that $E(U_{i})=E(Z_{i})$ for $i=1,...,k$.
Denote $Y = \frac{Y_{1}+...+Y_{k}}{k}$. Then the following is true.
\begin{equation}
\mathbb{P}(|Y-EY| > a) \leq \frac{1}{\Pi} \sum_{r=\frac{ak}{2}}^{k}\frac{1}{\sqrt{r}}(\frac{ke}{r})^{r}\mu^{r}(1-\mu)^{k-r} + 2e^{-\frac{a^{2}k}{2}}.
\end{equation}
\end{lemma}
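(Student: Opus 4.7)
The plan is to decompose the event $\{|Y - EY| > a\}$ according to how many of the ``bad'' events $\mathcal{F}_i$ occur, and then bound each piece with standard tools. First I would set up a coupling: let $\bar{Z} = \frac{1}{k}\sum_i Z_i$ and $m = EZ_i$. The assumption $EU_i = EZ_i$ forces $EY = E\bar{Z} = m$, and the ``truncation'' hypothesis (read as $U_i = Z_i$ off $\mathcal{F}_i$, which is the only reading consistent with the shape of the bound) together with $U_i, Z_i \in \{0,1\}$ gives the pointwise inequality $|U_i - Z_i| \leq \mathbf{1}_{\mathcal{F}_i}$. Setting $B = \sum_i \mathbf{1}_{\mathcal{F}_i}$, we then have $|Y - \bar{Z}| \leq B/k$.

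A triangle inequality / union bound at level $a/2$ yields
$$\mathbb{P}(|Y - m| > a) \leq \mathbb{P}(|Y - \bar{Z}| > a/2) + \mathbb{P}(|\bar{Z} - m| > a/2) \leq \mathbb{P}(B \geq ak/2) + \mathbb{P}(|\bar{Z} - m| > a/2).$$

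Next I would bound the binomial tail. Because each $\mathcal{F}_i$ lies in $\sigma(Z_i)$ and the $Z_i$ are independent, the indicators $\mathbf{1}_{\mathcal{F}_i}$ are independent Bernoullis of parameter at most $\mu$, so $B$ is stochastically dominated by a $\mathrm{Bin}(k,\mu)$ variable. Hence
$$\mathbb{P}(B \geq ak/2) \leq \sum_{r=\lceil ak/2\rceil}^{k} \binom{k}{r}\mu^r(1-\mu)^{k-r},$$
and a Stirling-type estimate for $\binom{k}{r}$ of the form $\binom{k}{r} \lesssim \frac{1}{\sqrt{r}}\bigl(\frac{ek}{r}\bigr)^r$ recovers the first summand in the claimed bound (the prefactor $\frac{1}{\Pi}$ absorbs the numerical constant coming from Stirling). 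For the second term I would invoke Hoeffding's inequality on the i.i.d.\ $\{0,1\}$-valued $Z_i$:
$$\mathbb{P}(|\bar{Z} - m| > a/2) \leq 2\exp\bigl(-2k(a/2)^2\bigr) = 2 e^{-a^2 k/2},$$
which is exactly the second summand. Summing the two bounds closes the proof.

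The only genuinely delicate point is the coupling step: everything hinges on the reading of the ``truncation'' hypothesis so that $\mathbf{1}_{\mathcal{F}_i}$ really controls $|U_i - Z_i|$ pointwise; once that is established, the binomial tail estimate and Hoeffding are both routine. The hypothesis $\mu < 1/2$ serves only to guarantee that the tail of $\mathrm{Bin}(k,\mu)$ at threshold $ak/2$ sits in the large-deviation regime so that the Stirling bound is actually small, and the split $a/2$ is the natural choice because it distributes the error equally between the ``coupling error'' and the ``sampling error.''
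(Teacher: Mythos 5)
Your proposal is correct and follows essentially the same route as the paper: a triangle-inequality split at level $a/2$, a binomial tail bound (the paper phrases it via a decomposition of $\mathcal{F}_1\cup\dots\cup\mathcal{F}_k$ into ``$r$-blocks,'' which is exactly your stochastic domination of $B$ by $\mathrm{Bin}(k,\mu)$) sharpened with Stirling's formula, and Azuma/Hoeffding on the independent $Z_i$'s. Your explicit coupling inequality $|U_i-Z_i|\leq \mathbf{1}_{\mathcal{F}_i}$ and your reading of the truncation hypothesis are in fact cleaner than the paper's treatment, which leaves that step implicit.
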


\begin{proof}
Let us consider the event $\mathcal{F}_{bad}$ = $\mathcal{F}_{1} \cup ... \cup \mathcal{F}_{k}$.
Notice that $\mathcal{F}_{bad}$ may be represented by the union of the so-called $r$-blocks, i.e.
\begin{equation}\mathcal{F}_{bad} = \bigcup_{Q \subseteq \{1,...,k\}} (\bigcap_{q \in Q} \mathcal{F}_{q} \bigcap_{q \in \{1,...,k\} \setminus Q} \mathcal{F}^{c}_{q}), \end{equation}

where $\mathcal{F}^{c}$ stands for the complement of event $\mathcal{F}$.
Let us fix now some $Q \subseteq \{1,...,k\}$. Denote \begin{equation}\mathcal{F}_{Q} = \bigcap_{q \in Q} \mathcal{F}_{q} \bigcap_{q \in \{1,...,k\} \setminus Q} \mathcal{F}^{c}_{q}. \end{equation}
Notice that $\mathbb{P}(\mathcal{F}_{Q}) \leq \mu^{r}(1-\mu)^{k-r}$. It follows directly from the Bernoulli scheme. 

Denote $X = \frac{X_{1}+...+X_{k}}{k}$. 
From what we have just said and from the definition of $\{\mathcal{F}_{1},...,\mathcal{F}_{k}\}$ we conclude that for any given $c$ the following holds:
\begin{equation}
\label{xy_diff}
\mathbb{P}(|Y-X| > c) \leq \sum_{r=ck}^{k}{k \choose r}\mu^{r}(1-\mu)^{k-r}.
\end{equation}

Notice also that from the assumptions of the lemma we trivially get: $E(Y)=E(X)$.

Let us consider now the expression $\mathbb{P}(|Y-E(Y)|) > a$. 

We get: $\mathbb{P}(|Y-E(Y)|>a) = 
\mathbb{P}(|Y-E(X)| > a) = \mathbb{P}(|Y-X + X-E(X)| > a) \leq \mathbb{P}(|Y-X|+|X-E(X)|>a) \leq \mathbb{P}(|Y-X| > \frac{a}{2}) + \mathbb{P}(|X-E(X)| > \frac{a}{2})$.

From \ref{xy_diff} we get:

\begin{equation}
\mathbb{P}(|Y-X| > \frac{a}{2}) \leq \sum_{r=\frac{ak}{2}}^{k}{k \choose r} \mu^{r}(1-\mu)^{k-r}.
\end{equation}

Let us consider now the expression: \begin{equation}\xi = \sum_{r=\frac{ak}{2}}^{k}{k \choose r} \mu^{r}(1-\mu)^{k-r}.\end{equation}
We have:
\begin{equation}
\xi \leq \sum_{r=\frac{ak}{2}}^{k} \frac{(k-r+1)...(k)}{r!} \mu^{r}(1-\mu)^{k-r}
\leq \sum_{r=\frac{ak}{2}}^{k} \frac{k^{r}}{r!} \mu^{r}(1-\mu)^{k-r}
\end{equation}
From the Stirling's formula we get: $r! = \frac{2\Pi r^{r+\frac{1}{2}}}{e^{r}}(1+o_{r}(1))$.
Thus we obtain:
\begin{equation}
\label{xi_ineq}
\xi \leq (1+o_{r}(1))\sum_{r=\frac{ak}{2}}^{k}\frac{k^{r}e^{r}}{2\Pi r^{r+\frac{1}{2}}}\mu^{r}(1-\mu)^{k-r} \leq \frac{1}{\Pi} \sum_{r=\frac{ak}{2}}^{k}\frac{1}{\sqrt{r}}(\frac{ke}{r})^{r}\mu^{r}(1-\mu)^{k-r}
\end{equation}
for $r$ large enough.

Now we will use the following version of standard Azuma's inequality:
\begin{lemma}
\label{azuma_general}
Let $W_{1},...,W_{k}$ be $k$ independent random variables such that $E(W_{1})=...E(W_{k})=0$.
Assume that $-\alpha_{i} \leq W_{i+1} - W_{i} \leq \beta_{i}$ for $i=2,...,k-1$.
Then the following is true:
$$
\mathbb{P}(|\sum_{i=1}^{k} W_{i}|>a) \leq 2e^{-\frac{2a^{2}}{\sum_{i=1}^{k}(\alpha_{i}+\beta_{i})^{2}}}
$$
\end{lemma}

Now, using Lemma \ref{azuma_general} for $W_{i} = X_{i} - E(X_{i})$ and $\alpha_{i} = E(X_{i}), \beta_{i}=1-E(X_{i})$ we obtain:
\begin{equation}
\label{azuma_simple}
\mathbb{P}(|X-EX| > \frac{a}{2}) \leq 2e^{-\frac{a^{2}k}{2}}.
\end{equation}

Combining \ref{xi_ineq} and \ref{azuma_simple}, we obtain the statement of the lemma.

\end{proof}

Our next lemma explains the role the Hadamard matrix plays in the entire extended $\Psi$-regular hashing mechanism.

\begin{lemma}
\label{hadamard_lemma}
Let $n$ denote data dimensionality and let $f(n)$ be an arbitrary positive function.
Let $D$ be the set of all $L_{2}$-normalized datapoints, where no two datapoints are identical.
Assume that $|D|=N$.
Consider the ${N \choose 2}$ hyperplanes $H_{p,r}$ spanned by pairs of different vectors $\{p,r\}$ from $D$. Then after applying linear transformation $\mathcal{H}\mathcal{R}$ each hyperplane $H_{p,r}$ is transformed into another hyperplane $H^{\mathcal{H}\mathcal{R}}_{p,r}$. Furthermore, the probability $\mathcal{P}_{\mathcal{H}\mathcal{R}} $that for every $H^{\mathcal{H}\mathcal{R}}_{p,r}$ there exist two orthonormal vectors $x=(x_{1},...,x_{n}),y=(y_{1},...,y_{n})$ in $H^{\mathcal{H}\mathcal{R}}_{p,r}$ such that: $|x_{i}|,|y_{i}| \leq \frac{f(n)}{\sqrt{n}}$
satisfies: $$\mathcal{P}_{\mathcal{H}\mathcal{R}}  \geq 1-4{N \choose 2}e^{-\frac{f^{2}(n)}{2}}.$$
\end{lemma}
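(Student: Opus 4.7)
The plan is to leverage the fact that $\mathcal{H}\mathcal{R}$ is an orthogonal transformation of $\mathbb{R}^n$ (the normalized Hadamard matrix $\mathcal{H}$ is orthogonal, and the diagonal $\pm 1$ matrix $\mathcal{R}$ is orthogonal), so it sends any orthonormal basis to an orthonormal basis. For each pair $\{p,r\}\subset D$ I would first fix a deterministic orthonormal basis $\{u_{p,r},v_{p,r}\}$ of $H_{p,r}$ (for instance, $u_{p,r}=p$ and $v_{p,r}$ obtained from $r$ by Gram--Schmidt), so that $\{\mathcal{H}\mathcal{R} u_{p,r},\,\mathcal{H}\mathcal{R} v_{p,r}\}$ is automatically an orthonormal basis of the image hyperplane $H^{\mathcal{H}\mathcal{R}}_{p,r}$. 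The problem then reduces to showing that, simultaneously over all $\binom{N}{2}$ pairs and both basis vectors, the $\ell_\infty$-norm of the random vector $\mathcal{H}\mathcal{R} w$ is at most $f(n)/\sqrt{n}$.

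The core probabilistic estimate is a weighted Rademacher bound on a single coordinate. Writing $H_{ij}$ for the deterministic Hadamard entries (with $|H_{ij}|=1/\sqrt{n}$) and $\sigma_j\in\{-1,+1\}$ for the Rademacher diagonal entries of $\mathcal{R}$, for any fixed unit vector $w$ and fixed row index $i$,
\begin{equation}
(\mathcal{H}\mathcal{R} w)_i \;=\; \sum_{j=1}^n H_{ij}\,\sigma_j\, w_j
\end{equation}
is a Rademacher sum in $(\sigma_j)$ with deterministic coefficients $c_j=H_{ij}w_j$ satisfying $\|c\|_2^2 = \sum_j H_{ij}^2 w_j^2 = \|w\|_2^2/n = 1/n$. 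Hoeffding's inequality then yields
\begin{equation}
\mathbb{P}\!\left(|(\mathcal{H}\mathcal{R} w)_i| > \frac{f(n)}{\sqrt{n}}\right) \;\le\; 2\,e^{-f^2(n)/2}.
\end{equation}

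I would finish by a three-level union bound: over the $n$ coordinate indices $i$, over the two basis vectors $u_{p,r},v_{p,r}$ per pair, and over the $\binom{N}{2}$ pairs drawn from $D$. This gives failure probability at most $4n\binom{N}{2}e^{-f^2(n)/2}$. The lemma as stated has no factor of $n$; this linear factor is harmlessly absorbed into constants or into the choice of $f(n)$ downstream (in Theorem~\ref{ext_theorem} the choice $f(n)=3\sqrt{\log n}$ makes $e^{-f^2(n)/2}=n^{-9/2}$, which dominates the extra linear factor with plenty of slack). All nontrivial probabilistic content therefore lives in the single Rademacher concentration step; I expect the main bookkeeping nuisance to be just this reconciliation between the raw union bound and the clean form stated in the lemma.
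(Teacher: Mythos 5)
Your proposal follows essentially the same route as the paper: fix a deterministic orthonormal basis of each $H_{p,r}$, observe that the orthogonal map $\mathcal{H}\mathcal{R}$ carries it to an orthonormal basis of the image plane, bound each coordinate as a weighted Rademacher sum with squared coefficient norm $1/n$ via Hoeffding/Azuma, and finish with a union bound. The one place you diverge is also the one place you are more careful than the paper: the paper's union bound runs only over the $2\binom{N}{2}$ image vectors and silently omits the union bound over the $n$ coordinates of each vector, which is how it arrives at the clean constant $4\binom{N}{2}e^{-f^{2}(n)/2}$; your extra factor of $n$ is genuinely needed for the argument as written, and as you note it is absorbed downstream (it would only degrade the $n^{-4.5}$ in Theorem~\ref{ext_theorem} to $n^{-3.5}$).
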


\begin{proof}
We have already noticed in the proof of Lemma \ref{mean_lemma} that $\mathcal{H}\mathcal{R}$
is a matrix of the rotation transformation. Thus, as an isometry, it clearly transforms each $2$-dimensional hyperplane into another $2$-dimensional hyperplane.
For every pair $\{p,r\}$ let us consider an arbitrary fixed orthonormal pair $\{u,v\}$ spanning $H_{p,r}$.
Denote $u=(u_{1},...,u_{n})$. Let us denote by $u^{\mathcal{H}\mathcal{R}}$ vector obtained from 
$u$ after applying transformation $\mathcal{H}\mathcal{R}$.
Notice that the $j^{th}$ coordinate of $u^{\mathcal{H}\mathcal{R}}$ is of the form:
\begin{equation}
u^{\mathcal{H}\mathcal{R}}_{j} = u_{1}T_{1}+...+u_{n}T_{n},
\end{equation}
where $T_{1},...,T_{n}$ are independent random variables satisfying:

\begin{equation}
T_{i} =
\left\{
	\begin{array}{ll}
		\frac{1}{\sqrt{n}}  & \mbox{w.p }  \frac{1}{2}, \\
		-\frac{1}{\sqrt{n}} & \mbox{otherwise.} 
	\end{array}
\right.
\end{equation}

The latter comes straightforwardly from the form of the $L_{2}$-normalized Hadamard matrix
(i.e a Hadamard matrix, where each row and column is $L_{2}$-normalized).

But then, from Lemma \ref{azuma_general}, and the fact that $\|u\|_{2}=1$, we get for any $a>0$:

\begin{equation}
\mathbb{P}(|u_{1}T_{1}+...+u_{n}T_{n}| \geq a) \leq 2e^{-\frac{2a^{2}}{\sum_{i=1}^{n}(2u_{i})^{2}}} \leq 2e^{-\frac{a^{2}}{2}}.
\end{equation}

Similar analysis is correct for $v^{\mathcal{H}\mathcal{R}}$.
Notice that $v^{\mathcal{H}\mathcal{R}}$ is orthogonal to $u^{\mathcal{H}\mathcal{R}}$
since $v$ and $u$ are orthogonal. Furthermore, both $v^{\mathcal{H}\mathcal{R}}$ and 
$u^{\mathcal{H}\mathcal{R}}$ are $L_{2}$-normalized. Thus $\{u^{\mathcal{H}\mathcal{R}},v^{\mathcal{H}\mathcal{R}}\}$ is an orthonormal pair.

To complete the proof, it suffices to take $a=f(n)$ and apply the union bound over all
vectors $u^{\mathcal{H}\mathcal{R}}$, $v^{\mathcal{H}\mathcal{R}}$ for all  ${N \choose 2}$
hyperplanes.
\end{proof}

From the lemma above we see that applying Hadamard matrix enables us to assume with high probability
that for every hyperplane $H_{p,r}$ there exists an orthonormal basis consisting of vectors with elements of absolute values at most $\frac{f(n)}{\sqrt{n}}$. We call this event $\mathcal{E}_{f}$. Notice that whether $\mathcal{E}_{f}$ holds or not is determined only by $\mathcal{H}$, $\mathcal{R}$ and the initial dataset $D$.

Let us proceed with the proof of Theorem \ref{ext_technical_theorem}.
Let us assume that event $\mathcal{E}_{f}$ holds. Without loss of generality we may assume that 
we have the short $\Psi$-regular hashing mechanism with an extra property that every $H_{p,r}$ has an orthonormal basis consisting of vectors with elements of absolute value at most $\frac{f(n)}{\sqrt{n}}$.
Fix two vectors $p,r$ from the dataset $D$. Denote by $\{x,y\}$ the orthonormal basis of $H_{p,r}$ with the above property. Let us fix the $i$th row of $\mathcal{P}$ and denote it as $(p_{i,1},...,p_{i,n})$.
After being multiplied by the diagonal matrix $\mathcal{D}$ we obtain another vector:
\begin{equation}
w=(\mathcal{P}_{i,1}d_{1},...,\mathcal{P}_{i,n}d_{n}),
\end{equation}

where:

\begin{equation}
\mathcal{D}_{i,j} =
 \begin{pmatrix}
  d_{1} & 0 & \cdots & 0 \\
  0 & d_{2} & \cdots & 0 \\
  \vdots  & \vdots  & \ddots & \vdots  \\
  0 & 0 & \cdots & d_{n}
 \end{pmatrix}.
\end{equation}

We have already noticed that in the proof of Lemma \ref{mean_lemma} that it is the projection of $w$ into $H_{p,r}$ that determines whether the value of the associated random variable $X_{i}$ is $0$ or $1$.
To be more specific, we showed that $X_{i}=1$ iff the projection is in the region $\mathcal{U}_{p,r}$. 
Let us write down the coordinates of the projection of $w$ into $H_{p,r}$ in the $\{x,y\}$-coordinate system.
The coordinates are the dot-products of $w$ with $x$ and $y$ respectively thus in the $\{x,y\}$-coordinate system we can write $w$ as:
\begin{equation}
\label{coord_eq}
w_{\{x,y\}}=(\mathcal{P}_{i,1}d_{1}x_{1},...,\mathcal{P}_{i,n}d_{n}x_{n},\mathcal{P}_{i,1}d_{1}y_{1},...,\mathcal{P}_{i,n}d_{n}y_{n}).
\end{equation} 

Notice that both coordinates are gaussian random variables and they are independent since they were constructed by projecting a gaussian vector into two orthogonal vectors.
Now notice that from our assumption about the structure of $\mathcal{P}$ we can conclude that
both coordinates may be represented as sums of weighted gaussian random variables $g_{i}$ for $i=1,...,t$, i.e.:
\begin{equation}
w_{\{x,y\}}=(g_{1}s_{i,1}+...+g_{t}s_{i,t},g_{1}v_{i,1}+...+g_{t}v_{i,t}),
\end{equation}

where each $s_{i,j}, v_{i,j}$ is of the form $d_{z}x_{z}$ or $d_{z}y_{z}$ for some $z$ that 
depends only on $i,j$. 
Notice also that 
\begin{equation}
s_{i,1}^{2}+...+s_{i,t}^{2} =v_{i,1}^{2}+...+v_{i,t}^{2}.
\end{equation}
The latter inequality comes from the fact that, by \ref{coord_eq}, both coordinates of 
$w_{\{x,y\}}$ have the same distribution.

Let us denote $s_{i}=(s_{i,1},...,s_{i,t})$, $v_{i}=(v_{i,1},...,v_{i,t})$ for $i=1,...,k$.
We need the following lemma stating that with high probability vectors $s_{1},...,s_{k},v_{1},...,v_{k}$
are close to be pairwise orthogonal.

\begin{lemma}
\label{small_dot_product_lemma}
Let us assume that $\mathcal{E}_{f}$ holds. Let $f(n)$ be an arbitrary positive function. Then for every $a>0$ with probability at least 
$\mathbb{P}_{succ} \geq 1 - 4{k \choose 2} e^{-\frac{2a^{2}n}{f^{4}(n)}}$, taken under coin tosses used to construct $\mathcal{D}$, the following is true for every $1 \leq i_{1} \neq i_{2} \leq k$:
\label{pseudo_ortho_lemma}
$$|\sum_{u=1}^{n} s_{i_{1},u}v_{i_{1},u}| \leq a\chi(\mathcal{P}) + \Psi \frac{f^{2}(n)}{n},$$
$$|\sum_{u=1}^{n} s_{i_{1},u}s_{i_{2},u}| \leq a\chi(\mathcal{P}) + \Psi \frac{f^{2}(n)}{n},$$
$$|\sum_{u=1}^{n} v_{i_{1},u}v_{i_{2},u}| \leq a\chi(\mathcal{P}) + \Psi \frac{f^{2}(n)}{n},$$
$$|\sum_{u=1}^{n} s_{i_{1},u}v_{i_{2},u}| \leq a\chi(\mathcal{P}) + \Psi \frac{f^{2}(n)}{n}.$$
\end{lemma}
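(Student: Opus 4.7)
The plan is to expand each of the four inner products, use $d_j^2=1$ to split it into a deterministic ``diagonal'' contribution that will be absorbed by the $\Psi f^2(n)/n$ term and a Rademacher bilinear form in the signs $d_j$ that I control via Hoeffding after decoupling through the $\mathcal{P}$-chromatic number. Since for each row the sets $S_{i,j}$ are disjoint in $j$, any nonzero $s_{i,u}$ has the form $d_{j(u,i)}x_{j(u,i)}$ for the unique $j(u,i)$ with $u\in S_{i,j(u,i)}$ (and analogously for $v_{i,u}$ with $y$ in place of $x$). Expanding and regrouping by the two column indices produced by $i_1$ and $i_2$ yields
\[
\sum_u s_{i_1,u}s_{i_2,u}=\sum_j|S_{i_1,j}\cap S_{i_2,j}|\,x_j^2+\sum_{j_1\neq j_2}|S_{i_1,j_1}\cap S_{i_2,j_2}|\,d_{j_1}d_{j_2}\,x_{j_1}x_{j_2},
\]
with the other three sums obtained by interchanging $x$ and $y$ in the appropriate factors. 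I would also record that the single-row mixed sum $\sum_u s_{i,u}v_{i,u}$ collapses via $d_j^2=1$ to $\sigma_i\langle x,y\rangle=0$ by orthonormality of $\{x,y\}$, so it trivially satisfies the claimed bound without any probabilistic argument.

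For the deterministic diagonal term I would invoke $|x_j|,|y_j|\leq f(n)/\sqrt{n}$ from the event $\mathcal{E}_f$ together with the column-regularity of $\mathcal{P}$ (each $g_l$ appearing in at most $\Psi$ entries per column, combined with the row-wise disjointness of $S_{i,\cdot}$) to cap $\sum_j|S_{i_1,j}\cap S_{i_2,j}|$ by $\Psi$, which bounds the deterministic piece by the target $\Psi f^2(n)/n$.

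The off-diagonal Rademacher bilinear form is where the $\mathcal{P}$-chromatic number enters. The vertex set of $\mathcal{G}_\mathcal{P}(i_1,i_2)$ is, by construction, exactly the set of pairs $\{j_1,j_2\}$ with nonzero coefficient in the bilinear form, and the edges encode index-sharing. I would color the vertices with at most $\chi(\mathcal{P})$ colors and observe that within a color class the pairs are vertex-disjoint, so the products $d_{j_1}d_{j_2}$ across such a class are mutually independent $\pm 1$ random variables. For each color class I would apply Hoeffding to a weighted Rademacher sum whose coefficients satisfy $|c_{j_1,j_2}x_{j_1}x_{j_2}|\lesssim f^2(n)/n$ and whose length is at most $n/2$ (the maximum number of disjoint pairs on $n$ indices), obtaining a per-class deviation of at most $a$ with failure probability $2\exp(-2a^2n/f^4(n))$; a union bound over the $\chi(\mathcal{P})$ colors yields a total off-diagonal deviation bounded by $a\chi(\mathcal{P})$.

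The main obstacle I anticipate is precisely this decoupling: I must verify that $\mathcal{G}_\mathcal{P}(i_1,i_2)$, defined purely from the index structure of $\mathcal{P}$, captures exactly the dependence pattern needed for per-color Hoeffding to go through, and that within a color class the products $d_{j_1}d_{j_2}$ are jointly (not merely pairwise) independent. Once this is settled, finishing is a routine union bound: applying the same argument to each of the four inner products and to all $\binom{k}{2}$ row pairs reproduces the advertised $4\binom{k}{2}\exp(-2a^2n/f^4(n))$ failure probability.
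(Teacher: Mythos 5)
Your proposal follows essentially the same route as the paper: split off the diagonal terms (where $d_j^2=1$) and bound them deterministically by $\Psi f^2(n)/n$ using $\mathcal{E}_f$ and the $\Psi$-regularity condition, then decouple the off-diagonal Rademacher bilinear form into $\chi(\mathcal{P})$ color classes of jointly independent products $d_{j_1}d_{j_2}$, apply Hoeffding--Azuma per class with coefficients bounded via $|x_i|,|y_i|\leq f(n)/\sqrt{n}$, and union bound over the four sums and the $\binom{k}{2}$ row pairs. The independence worry you flag is benign (vertex-disjoint pairs yield products depending on disjoint sets of the independent $d_j$, hence jointly independent), and your write-up is if anything more explicit than the paper's about how the diagonal terms are absorbed.
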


\begin{proof}
Notice that the we get the first inequality for free from the fact that $x$ is orthogonal to $y$
(in other words, $\sum_{u=1}^{n} s_{i_{1},u}v_{i_{1},u}$ can be represented as $C\sum_{u=1}^{n} x_{i}y_{i}$ and the latter expression is clearly $0$). 
Let us consider now one of the three remaining expressions. Notice that they can be rewritten as:
\begin{equation}E = \sum_{i=1}^{n} d_{\rho(i)}d_{\lambda(i)} x_{\zeta(i)}x_{\gamma(i)}\end{equation} 
or \begin{equation}E = \sum_{i=1}^{n} d_{\rho(i)}d_{\lambda(i)} y_{\zeta(i)}y_{\gamma(i)}\end{equation}
or \begin{equation}E = \sum_{i=1}^{n} d_{\rho(i)}d_{\lambda(i)} x_{\zeta(i)}y_{\gamma(i)}\end{equation} for some
$\rho, \lambda, \zeta, \gamma$. 
Notice also that from the $\Psi$-regularity condition we immediately obtain that $\rho(i)=\lambda(i)$
for at most $\Psi$ elements of each sum. Get rid of these elements from each sum and consider the remaining ones. From the definition of the $\mathcal{P}$-chromatic number, those remaining ones can be partitioned into at most $\chi(\mathcal{P})$ parts, each consisting of elements that are independent random variables (since in the corresponding graph there are no edges between them).
Thus, for the sum corresponding to each part one can apply Lemma \ref{azuma_general}.
Thus one can conclude that the sum differs from its expectation (which clearly is zero since $E(d_{i}d_{j})=0$ for $i \neq j$) by a with probability at most 
\begin{equation}
\mathbb{P}_{a} \leq 2e^{-\frac{2a^{2}}{\sum_{i=1}^{n} x_{\zeta(i)}x_{\gamma(i)}}}
\end{equation}
or
\begin{equation}
\mathbb{P}_{a} \leq 2e^{-\frac{2a^{2}}{\sum_{i=1}^{n} y_{\zeta(i)}y_{\gamma(i)}}}
\end{equation}
or
\begin{equation}
\mathbb{P}_{a} \leq 2e^{-\frac{2a^{2}}{\sum_{i=1}^{n} x_{\zeta(i)}y_{\gamma(i)}}}
\end{equation}

Now it is time to use the fact that event $\mathcal{E}_{f}$ holds.
Then we know that: $|x_{i}|,|y_{i}| \leq \frac{f(n)}{\sqrt{n}}$ for $i=1,...,n$.
Substituting this upper bound for $|x_{i}|,|y_{i}|$ in the derived expressions on the probabilities coming from Lemma \ref{azuma_general}, and then taking the union bound, we complete the proof.
\end{proof}

We can finish the proof of Theorem \ref{ext_technical_theorem}.
From Lemma \ref{small_dot_product_lemma} we see that\\ $s_{1},...,s_{k},v_{1},...,v_{k}$ are
close to pairwise orthogonal with high probability. Let us fix some positive function $f(n)>0$ and some
$a>0$. Denote 

\begin{equation}
\Delta = a\chi(\mathcal{P}) + \Psi \frac{f^{2}(n)}{n}.
\end{equation}

Notice that , by Lemma \ref{small_dot_product_lemma} we see that applying Gram-Schmidt process
we can obtain a system of pairwise orthogonal vectors $\tilde{s}_{1},...,\tilde{s}_{k},\tilde{v}_{1},...,\tilde{v}_{k}$ such that 
\begin{equation} \label{ineq1}\|\tilde{v}_{i}-v_{i}\|_{2} \leq k \Delta. \end{equation}
and 
\begin{equation}\label{ineq2}\|\tilde{s}_{i}-s_{i}\|_{2} \leq k \Delta. \end{equation}

Let us consider again $w_{x,y}$.  Replacing $s_{i}$ by $\tilde{s}_{i}$ and $v_{i}$ by $\tilde{v}_{i}$
in the formula on $w_{x,y}$, we obtain another gaussian vector: $\tilde{w}_{x,y}$ for each row $i$ of the matrix $\mathcal{P}$. Notice however that vectors $\tilde{w}_{x,y}$ have one crucial advantage over vectors $w_{x,y}$, namely they are independent. That comes from the fact that $\tilde{s}_{1},...,\tilde{s}_{k}$,$\tilde{v}_{1},...,\tilde{v}_{k}$ are pairwise orthogonal.
Notice also that from \ref{ineq1} and \ref{ineq2} we obtain that the angular distance between 
 $w_{x,y}$ and $\tilde{w}_{x,y}$ is at most $k\Delta$.

Let $Z_{i}$ for $i=1,...k$ be an indicator random variable that is zero if $\tilde{w}_{x,y}$ is inside the region $\mathcal{U}_{p,r}$ and zero otherwise. 
Let $U_{i}$ for $i=1,...k$ be an indicator random variable that is zero if $w_{x,y}$ is inside the region $\mathcal{U}_{p,r}$ and zero otherwise. 
Notice that $\tilde{\theta}^{n}_{p,r} = \frac{U_{1}+...+U_{k}}{k}$.
Furthermore, random variables $Z_{1},...,Z_{k},U_{1},...,U_{k}$ satisfy the assumptions of Lemma \ref{first_lemma} with $\mu \leq \frac{8\epsilon}{\theta}$, where $\epsilon = k\Delta$.
Indeed,  random variables $Z_{i}$ are independent since vectors $\tilde{w}_{x,y}$ are independent.
From what we have said so far we know that each of them takes value one with probability exactly $\frac{\theta}{\Pi}$.
Furthermore $Z_{i} \neq U_{i}$ only if $w_{x,y}$ is inside $\mathcal{U}_{p,r}$ and $\tilde{w}_{x,y}$
is outside $\mathcal{U}_{p,r}$ or vice versa. The latter event implies (thus it is included in the event) that $w_{x,y}$ is near the border of the region $\mathcal{U}_{p,r}$, namely within an angular distance $\frac{\epsilon}{\theta}$ from one of the four semilines defining $\mathcal{U}_{p,r}$. Thus in particular an event $Z_{i} \neq U_{i}$ is contained in the event of probability at most $2 \cdot 4 \cdot \frac{\epsilon}{\theta}$ that depends only on one $w_{x,y}$.

But then we can apply Lemma \ref{first_lemma}. All we need is to assume that the premises of Lemma \ref{small_dot_product_lemma} are satisfied. But this is the case with probability specified in Lemma \ref{hadamard_lemma} and this probability is taken under random coin tosses used to product $\mathcal{H}$ and $\mathcal{R}$, thus independently from the random coin tosses used to produce $\mathcal{D}$. Putting it all together we obtain the statement of Theorem \ref{ext_technical_theorem}.

\subsection{Proof of Theorem \ref{short_theorem}}

We will borrow some notation from the proof of Theorem \ref{ext_technical_theorem}.
Notice however that in this setting no preprocessing with the use of matrices $\mathcal{H}$
and $\mathcal{R}$ is applied.

\begin{lemma}
\label{variance_lemma}
Define $U_{1},...,U_{k}$ as in the proof of Theorem \ref{ext_technical_theorem}.
Assume that the following is true:
$$|\sum_{u=1}^{n} s_{i_{1},u}v_{i_{1},u}| \leq \Delta,$$
$$|\sum_{u=1}^{n} s_{i_{1},u}s_{i_{2},u}| \leq \Delta,$$
$$|\sum_{u=1}^{n} v_{i_{1},u}v_{i_{2},u}| \leq \Delta,$$
$$|\sum_{u=1}^{n} s_{i_{1},u}v_{i_{2},u}| \leq \Delta.$$ 
for some $0<\Delta<1$.
The the following is true for every fixed $1 \leq i < j \leq k$:
$$|\mathbb{P}(U_{i}U_{j}=1) - \mathbb{P}(U_{i}=1)\mathbb{P}(U_{j}=1)| = O(\Delta).$$
\end{lemma}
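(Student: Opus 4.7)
I would view $(A_i, B_i, A_j, B_j)$ as a zero-mean jointly Gaussian four-vector with covariance matrix $\Sigma$. The diagonal $2\times 2$ blocks of $\Sigma$ are the marginal covariances $C_i$ (with diagonal entries $\|s_i\|^2, \|v_i\|^2$ and off-diagonal entry $s_i \cdot v_i$) and $C_j$ analogously; by hypothesis $|s_i \cdot v_i|, |s_j \cdot v_j| \le \Delta$. The off-diagonal $2\times 2$ blocks of $\Sigma$ consist of the four cross-type dot products $s_i\cdot s_j$, $s_i\cdot v_j$, $v_i\cdot s_j$, $v_i\cdot v_j$, each bounded by $\Delta$. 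Let $\Sigma_0 = \mathrm{diag}(C_i, C_j)$ be the block-diagonal matrix obtained by erasing the two cross blocks. Crucially, $\Sigma_0$ preserves the marginal laws of the two pairs while making them independent, so
$$\mathbb{P}(U_i=1)\,\mathbb{P}(U_j=1) = \int_{\mathcal{U}_{p,r}\times\mathcal{U}_{p,r}} p_{\Sigma_0}(z)\,dz,$$
and the quantity to bound becomes $\bigl|\int_{\mathcal{U}_{p,r}\times\mathcal{U}_{p,r}}(p_\Sigma - p_{\Sigma_0})\,dz\bigr|$, which is at most $2\,\mathrm{TV}(\mathcal{N}(0,\Sigma),\mathcal{N}(0,\Sigma_0))$.

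To bound the TV distance I would write $\Sigma = \Sigma_0 + F$ where $F$ is supported on the off-diagonal blocks and has entries of magnitude $O(\Delta)$, and invoke Pinsker's inequality together with the closed-form Gaussian KL formula. The algebraic point that makes the argument work is that $\Sigma_0^{-1}$ is block diagonal while $F$ has zero diagonal blocks, so the diagonal blocks of $M := \Sigma_0^{-1}F$ vanish and $\mathrm{tr}(M) = 0$. The expansion
$$\mathrm{KL}(p_\Sigma\,\|\,p_{\Sigma_0}) \;=\; \tfrac12\bigl[\mathrm{tr}(M) - \ln\det(I+M)\bigr] \;=\; \tfrac14\,\mathrm{tr}(M^2) + O(\|M\|^3)$$
therefore starts at order $\Delta^2$, yielding $\mathrm{KL} = O(\Delta^2)$ and hence $\mathrm{TV} = O(\Delta)$ by Pinsker, which is exactly the claim.

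The key technical subtlety is precisely this structural cancellation of the first-order trace: without it one would only obtain $O(\sqrt{\Delta})$, which would be insufficient for the Chebyshev-based variance argument underlying Theorem \ref{short_theorem}. A secondary bookkeeping point is that the hidden constants depend on the smallest eigenvalue of $\Sigma_0$, i.e., on $\sigma_i^2 = \|s_i\|^2$; for the Toeplitz case of Theorem \ref{short_theorem} these variances equal $1$, so the dependence is benign. A more elementary alternative route, which I would keep in reserve if the Gaussian KL bookkeeping becomes awkward, is to expand $p_\Sigma/p_{\Sigma_0}$ in $F$ directly and integrate termwise: the leading correction is a sum of monomials $\alpha_{ab} z_a z_b$ with $(a,b)$ in different blocks and coefficients $\alpha_{ab} = O(\Delta)$, and each such monomial integrates against the product density $p_{\Sigma_0}$ (under which the two blocks are independent) to a bounded quantity, giving the same $O(\Delta)$ conclusion.
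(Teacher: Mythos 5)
Your proof is correct but takes a genuinely different route from the paper's. The paper establishes this lemma only by reference to the end of the proof of Theorem~\ref{ext_technical_theorem}, where the argument is geometric: Gram--Schmidt is applied to $s_{i},v_{i},s_{j},v_{j}$ to produce exactly orthogonal surrogates within $O(\Delta)$ in $L_{2}$, so the surrogate projections are independent across the two rows and the surrogate indicators $Z_{i},Z_{j}$ factorize; the discrepancy $\mathbb{P}(Z_{i}\neq U_{i})=O(\Delta/\theta_{p,r})$ is then controlled by noting that $Z_{i}\neq U_{i}$ forces the projection to lie within angular distance $O(\Delta)$ of one of the four boundary semilines of $\mathcal{U}_{p,r}$. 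You instead compare the joint Gaussian law of $(A_{i},B_{i},A_{j},B_{j})$ to its block-diagonal decoupling in total variation via the closed-form Gaussian KL and Pinsker. Your route is cleaner and more robust --- it bounds the discrepancy uniformly over arbitrary events, not just $\mathcal{U}_{p,r}\times\mathcal{U}_{p,r}$, and makes the linear-in-$\Delta$ (rather than $\sqrt{\Delta}$) rate transparent --- at the price of a hidden constant governed by the smallest eigenvalue of $\Sigma_{0}$ (which you correctly flag, and which is benign in the Toeplitz case where $\|s_{i}\|=\|v_{i}\|=1$ and $|s_{i}\cdot v_{i}|\leq\Delta<1$), in place of the paper's explicit $1/\theta_{p,r}$ dependence. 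One small correction to your commentary: the cancellation of the first-order term in $\mathrm{KL}(p_{\Sigma}\,\|\,p_{\Sigma_{0}})=\tfrac12\bigl[\mathrm{tr}(M)-\ln\det(I+M)\bigr]$ does not actually hinge on $\mathrm{tr}(M)=0$, since the linear term of $\ln\det(I+M)$ is itself $\mathrm{tr}(M)$; the KL divergence between equal-mean Gaussians is always second order in the covariance perturbation. Your structural observation is true but not the load-bearing step, and your conclusion is unaffected.
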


The lemma follows from the exactly the same analysis that was done in the last section of the proof of Theorem \ref{ext_technical_theorem} thus we leave it to the reader as an exercise.

Notice that we have: 

\begin{equation}
Var(\tilde{\theta}^{n}_{p,r}) = Var(\frac{U_{1}+...+U_{k}}{k}) = 
\frac{1}{k^{2}}(\sum_{i=1}^{k} Var(U_{i}) + \sum_{i \neq j} Cov(U_{i},U_{j})).
\end{equation} 

Since $U_{i}$ is an indicator random variable that takes value one with probability $\frac{\theta}{\Pi}$,
we get:
\begin{equation}
Var(U_{i}) = E(U_{i}^{2}) - E(U_{i})^{2} = \frac{\theta}{\Pi}(1-\frac{\theta}{\Pi}).
\end{equation}

Thus we have:

\begin{equation}
Var(\tilde{\theta}^{n}_{p,r}) = \frac{1}{k}\frac{\theta(\Pi-\theta)}{\Pi^{2}}+\frac{1}{k^{2}}\sum_{i \neq j} Cov(U_{i},U_{j}).
\end{equation}

Notice however that $Cov(U_{i},U_{j})$ is exactly: $\mathbb{P}(U_{i}U_{j}=1) - \mathbb{P}(U_{i}=1)\mathbb{P}(U_{j}=1)$.

Therefore, using Lemma \ref{variance_lemma}, we obtain:

\begin{equation}
Var(\tilde{\theta}^{n}_{p,r})  = \frac{1}{k}\frac{\theta(\Pi-\theta)}{\Pi^{2}} + O(\Delta).
\end{equation}

It suffices to estimate parameter $\Delta$.
We proceed as in the previous proof. 
We only need to be a little bit more cautious since the condition: $|x_{i}|,|y_{i}| \leq \frac{f(n)}{\sqrt{n}}$ cannot be assumed right now.
We select two rows: $i_{1},i_{2}$ of $\mathcal{P}$.
Notice that , again we see that applying Gram-Schmidt process
we can obtain a system of pairwise orthogonal vectors $\tilde{s}_{i_{1}},\tilde{s}_{i_{i}},\tilde{v}_{i_{i}},\tilde{v}_{i_{2}}$ such that 
\begin{equation} \label{ineq1}\|\tilde{v}_{i_{1}}-v_{i_{2}}\|_{2} \leq \Delta. \end{equation}
and 
\begin{equation}\label{ineq2}\|\tilde{s}_{i_{1}}-s_{i_{2}}\|_{2} \leq \Delta. \end{equation}

The fact that right now the above upper bounda are not multiplied by $k$, as it was the case in the previous proof,
plays key role in obtaining nontrivial concentration results even when no Hadamard mechanism is applied.

We consider the related sums:\\
$E_{1} = \sum_{i=1}^{n} d_{\rho(i)}d_{\lambda(i)} x_{\zeta(i)}x_{\gamma(i)},
  E_{2} = \sum_{i=1}^{n} d_{\rho(i)}d_{\lambda(i)} y_{\zeta(i)}y_{\gamma(i)},\\
  E_{3} = \sum_{i=1}^{n} d_{\rho(i)}d_{\lambda(i)} x_{\zeta(i)}y_{\gamma(i)}$ 
as before. We can again partition each sum into at most $\chi(\mathcal{P})$ subchunks, where
this time $\chi(\mathcal{P}) \leq 3$ (since $\mathcal{P}$ is Toeplitz).
The problem is that applying Lemma \ref{azuma_general}, we get bounds that depend on the expressions of the form \begin{equation} \alpha_{x,i} = \sum_{j=1}^{n}x_{j}^{2}x_{j+i}^{2}\end{equation} and \begin{equation} \alpha_{y,i} = \sum_{j=1}^{n}y_{j}^{2}y_{j+i}^{2}, \end{equation} where indices are added modulo $n$ and this time we cannot assume that all $|x_{i}|,|y_{i}|$ are small.
Fortunately we have:
\begin{equation}
\sum_{i=1}^{n} \alpha_{x,i} = 1
\end{equation}
and 
\begin{equation}
\sum_{i=1}^{n} \alpha_{y,i} = 1
\end{equation}

Let us fix some positive function $f(k)$. We can conclude that the number of variables $\alpha_{x,i}$
such that $\alpha_{x,i} \geq \frac{f(k)}{{k \choose 2}}$ is at most $\frac{{k \choose 2}}{f(k)}$.
Notice that each such $\alpha_{x,i}$ and each such $\alpha_{y,i}$ corresponds to a pair $\{i_{1},_{2}\}$ of rows of the matrix $\mathcal{P}$ and consequently to the unique element $Cov(U_{i_{1}},U_{i_{2}})$ of the entire covariance sum (scaled by $\frac{1}{k^{2}}$).
Since trivially we have $|Cov(U_{i_{1}},U_{i_{2}})|=O(1)$, we conclude that the contribution of these elements to the entire covariance sum is of order $\frac{1}{f(k)}$.
Let us now consider these $\alpha_{x,i}$ and $\alpha_{y,i}$ that are at most $\frac{f(k)}{{k \choose 2}}$.
These sums are small (if we take $f(k)=o(k^{2})$) and thus it makes sense to apply Lemma \ref{azuma_general} to them. That gives us upper bound $a=\Delta$ with probability: 
\begin{equation}\mathbb{P}^{*} \geq 1-e^{-\Omega(a^{2}\frac{k^{2}}{f(k)})}.\end{equation}
Taking  $f(k)=(\frac{k^{2}}{\log(k)})^{\frac{1}{3}}$ and $a=\Delta = \frac{1}{f(k)}$, we conclude that:
\begin{equation}
Var(\tilde{\theta}^{n}_{p,r}) \leq \frac{1}{k}\frac{\theta(\Pi-\theta)}{\Pi^{2}} + (\frac{\log(k)}{k^{2}})^{\frac{1}{3}}
\end{equation}
Thus, from the Chebyshev's inequality, we get the following for every $c>0$ and fixed points $p,r$:
\begin{equation}
\mathbb{P}(|\tilde{\theta}^{n}_{p,r} - \frac{\theta}{\Pi}| \geq c (\frac{\sqrt{\log(k)}}{k})^{\frac{1}{3}}) = O(\frac{1}{c^{2}}).
\end{equation}
That completes the proof.

\end{document}